\theoremstyle{plain}
\newtheorem{theorem}{Theorem}
\newtheorem{lemma}[theorem]{Lemma}
\newtheorem{corollary}[theorem]{Corollary}
\theoremstyle{definition}
\newtheorem{definition}[theorem]{Definition}
\title{Justification logic enjoys the strong finite model property}
\author{Thomas Studer}
\date{}
\newcommand{\terms}{\text{Tm}}
\newcommand{\propositions}{\text{Prop}}
\newcommand{\justifies}{:}
\newcommand{\formulae}{\text{Fm}}
\newcommand{\model}{\mathcal{M}}
\newcommand{\worlds}{W}
\newcommand{\relation}{R}
\newcommand{\evidence}{\mathcal{E}}
\newcommand{\base}{\mathcal{B}}
\newcommand{\valuation}{\nu}
\newcommand{\forces}{\Vdash}
\newcommand{\pset}{\mathcal{P}}
\newcommand{\CS}{\mathsf{CS}}
\newcommand{\Low}{\mathsf{L}}
\newcommand{\Lo}{\mathsf{L}_\CS}
\newcommand{\J}{\mathsf{J}_\CS}
\newcommand{\JD}{\mathsf{JD}_\CS}
\newcommand{\JDF}{\mathsf{JD4}_\CS}
\newcommand{\JF}{\mathsf{J4}_\CS}
\newcommand{\JT}{\mathsf{JT}_\CS}
\newcommand{\LP}{\mathsf{LP}_\CS}
\newcommand{\ax}{\text{A}}
\newcommand{\axd}{\text{jd}}
\newcommand{\axt}{\text{jt}}
\newcommand{\axfour}{\text{j4}}
\newcommand{\size}[1]{|#1|}
\begin{document}
 \maketitle
 
 \begin{abstract}
We observe that justification logic enjoys a form the strong finite model property (sometimes also called small model property). Thus we obtain decidability proofs for justification logic that do not rely on Post's theorem.
\end{abstract}
 
 \section{Introduction}

Justification logics~\cite{ArtFit11SEP} are a family of logics that that, like modal logics, can express knowledge or provability of propositions.
However, instead of an implicit $\Box$-operator justification logics include explicit modalities of the form $t:$ where $t$ is a term representing a reason for an agent's knowledge or a proof a proposition.

Artemov developed the first justification logic~\cite{Art95TR,Art01BSL} to provide intuitionistic logic with a classical provability semantics.
Later Fitting~\cite{Fit05APAL} introduced epistemic models for justification logic. In this semantics, justification terms represent evidence a very general sense. For instance, our belief in $A$ may be justified by direct observation of $A$ or by learning that a friend heard about $A$. This general reading of justification led to a big variety of  epistemic justification logics for many different applications~\cite{Art08RSL,BucKuzStu11JANCL,BucKuzStu11WoLLIC,KMOS15,KuzStu12AiML}. 

There are many known decidability results for justification logics, see, for instance, \cite{BucKuzStu13LNCS,Kuz08PhD,Stu13JSL}.
However, many of these decidability proofs rely on completeness with respect to a recursively enumerable class of models and Post's theorem~\cite{pos44}.

In the present note we show that justification logic enjoys a form of the strong finite model property (which sometimes is called small model property)~\cite{brv02}. Thus we obtain decidability proofs for justification logics that do not make use of Post's theorem.

This note makes heavy use of~\cite{BucKuzStu11WoLLIC}.

\section{Justification Logics}
\label{sec:justificationlogics}

Justification terms are built from countably many constants $c_i$ and  countably many variables $x_i$ according to the following grammar:
\[
t \coloncolonequals c_i \ |\ x_i \ |\  (t \cdot t) \ |\ (t+t)   \ |\ !t \quad . 
\]
We denote the set of terms by $\terms$.
Formulae are built from  countably many atomic propositions~$p_i$ according to the following grammar:
\[
F \coloncolonequals p_i \ |\  \lnot F \ |\ (F \to F) \ |\ t \justifies F \quad. 
\]
$\propositions$ denotes the set of atomic propositions and $\formulae$ denotes the set of formulae.

The axioms of $\J$ consist of all instances of the following schemes:
\begin{description}
\item[\ax1] finitely many schemes axiomatizing  classical propositional logic
\item[\ax2] $t \justifies (A \rightarrow B) \rightarrow ( s \justifies A \rightarrow t\cdot s \justifies B)$
\item[\ax3] $t \justifies A \lor s \justifies A \rightarrow t+s \justifies A$
\end{description}

We will consider extension of $\J$ by the following axioms schemes.
\begin{description}
\item[(\axd)] $t \justifies \bot \to \bot$
\item[(\axt)] $t\justifies A \to A$
\item[(\axfour)] $t\justifies A \to !t \justifies t \justifies A$
\end{description}

A {\em constant specification} $\CS$ for a logic $\Low$ is any subset
\[
\CS \subseteq \{ c \justifies A  \ |\ \text{$c$ is a constant and $A$ is an axiom of $\Low$} \}.
\]

A constant specification $\CS$ for a logic $\Low$ is called
\begin{enumerate}
\item {\em axiomatically appropriate} if for each axiom $A$ of $\Low$ there is a constant $c$ such that $c \justifies A \in \CS$
\item {\em schematic} if for each constant $c$ the set $\{A \ |\ c \justifies A \in \CS\}$ consists of one or several (possibly zero)
axiom schemes, i.e.,~every constant justifies certain axiom schemes.
\end{enumerate}

For a constant specification $\CS$ the deductive system $\J$ is the Hilbert system given by the axioms \ax1--\ax3  and 
by the rules modus ponens and axiom necessitation:
  \[
  \begin{array}{c}
    A \quad A\rightarrow B
    \\ \hline
    B
  \end{array}\;\text{\scriptsize(MP)}\enspace,
  \qquad
  \begin{array}{c}
    c \justifies A \in \CS
    \\ \hline
    \underbrace{!! \cdots !}_n c \justifies \underbrace{! \cdots !}_{n-1} c \justifies \cdots \justifies !!c\justifies !c \justifies c \justifies A
  \end{array}\;\text{\scriptsize(AN!)}\enspace,
  \]
 where $n \geq 0$.
In the presence of the \axfour{} axiom a simplified axiom necessitation rule can be used:
  \[
  \begin{array}{c}
    c \justifies A \in \CS
    \\ \hline
    c \justifies A
  \end{array}\;\text{\scriptsize(AN)}\enspace.
  \]

Table \ref{table:systems} defines the various logics we  consider.
\begin{table}
\begin{center}
\begin{tabular}{l|ccccccccc}
 & \ax1 & \ax2 & \ax3 & \axd & \axt & \axfour & MP & AN! & AN \\ 
\hline
$\J$ & $\checkmark$ & $\checkmark$ & $\checkmark$ &   &    &   & $\checkmark$ & $\checkmark$ & \\
$\JD$ & $\checkmark$ & $\checkmark$ & $\checkmark$ & $\checkmark$  &    &   & $\checkmark$ & $\checkmark$ & \\
$\JT$ & $\checkmark$ & $\checkmark$ & $\checkmark$ &   &  $\checkmark$  &   & $\checkmark$ & $\checkmark$ & \\
$\JDF$ & $\checkmark$ & $\checkmark$ & $\checkmark$ & $\checkmark$  &    & $\checkmark$  & $\checkmark$ &  &$\checkmark$ \\
$\JF$ & $\checkmark$ & $\checkmark$ & $\checkmark$ &   &    & $\checkmark$  & $\checkmark$ & & $\checkmark$\\
$\LP$ & $\checkmark$ & $\checkmark$ & $\checkmark$ &   &  $\checkmark$  & $\checkmark$  & $\checkmark$ & & $\checkmark$  
\end{tabular}
\end{center}
\caption{Deductive Systems}
\label{table:systems}
\end{table}
We now present the semantics for these logics 

\begin{definition}[Evidence relation]\label{d:adevrel:1}
\newcounter{ic}
Let $(W,R)$ be a Kripke frame, i.e., $W\ne \varnothing$ and $R \subseteq W \times W$, and $\CS$ be a constant specification.
An admissible evidence relation $\evidence$ for a logic $\Lo$ is a subset of $\terms \times \formulae \times \worlds$ that satisfies the closure conditions:
\begin{enumerate}
\item if $(s,A,w) \in \evidence$ or $(t,A,w) \in \evidence$, then $(s+t,A,w) \in \evidence$
\item if $(s,A \to B,w) \in \evidence$ and $(t,A,w) \in \evidence$, then $(s \cdot t,B,w) \in \evidence$
\setcounter{ic}{\value{enumi}}
\end{enumerate}
Depending on whether or not the logic $\Lo$ contains the \axfour\ axiom,  the evidence function has to satisfy one
of the following two sets of closure conditions. If $\Lo$ does not include the \axfour\ axiom, then the additional requirement is:       
       \begin{enumerate}                                                                        
        \setcounter{enumi}{\value{ic}}
        \item if $c \justifies A \in \CS$ and $w \in \worlds$, then $(\underbrace{!!\cdots !}_{n} c,\underbrace{!\cdots !}_{n-1} c \justifies \cdots \justifies !!c \justifies !c \justifies c \justifies  A,w) \in \evidence$
        \setcounter{ic}{\value{enumi}}
       \end{enumerate}  
If $\Lo$ includes the \axfour\ axiom, then the additional requirement is:
       \begin{enumerate}
        \setcounter{enumi}{\value{ic}}
        \item if $c \justifies A \in \CS$ and $w \in W$, then $(c,A,w) \in \evidence$ 
        \item if $(t,A,w) \in \evidence$, then $(!t,t \justifies A,w) \in \evidence$
        \item \label{cond:mon:1} if $(t,A,w) \in \evidence$ and $w \relation v$, then $(t,A,v) \in \evidence$ 
       \end{enumerate}

\noindent If we drop condition \ref{cond:mon:1}, then we say $\evidence$ is a {\em t-evidence relation}.
Sometimes we use $\evidence(s,A,w)$ for $(s,A,w) \in \evidence$. 
\end{definition}

\begin{definition}[Evidence bases]\hfill
\label{def:evidencerelation}
\begin{enumerate}
\item An {\em evidence base} $\base$ is a subset of $\terms \times \formulae \times W$.
\item An evidence relation $\evidence$ is {\em based on} $\base$, if $\base \subseteq \evidence$.
\end{enumerate}
\end{definition} 

The closure conditions in the definition of admissible evidence function give rise to a monotone operator. The
minimal evidence relation based on $\base$ is the least fixed point of that operator and thus always exists.

\begin{definition}[Model]
Let $\CS$ be a constant specification. 
A \emph{Fitting model} for a logic $\Lo$ is a quadruple $\model = \left( \worlds, \relation, \evidence, \valuation \right)$ where
  \begin{itemize}
   \item $(\worlds, \relation)$ is a Kripke frame such that
   	\begin{itemize}
   	\item if $\Lo$ includes the \axfour\ axiom, then $\relation$ is transitive;
   	\item if $\Lo$ includes the \axt\ axiom, then $\relation$ is reflexive;
   	\item if $\Lo$ includes the \axd\ axiom, then $\relation$ is serial.
   	\end{itemize}
   \item $\evidence$ is an admissible evidence relation for $\Lo$ over the frame $(\worlds,\relation)$,
   \item $\valuation: \propositions \to \pset(\worlds)$, called  a valuation function.
  \end{itemize}
\end{definition}

\begin{definition}[Satisfaction relation]
The  relation of formula $A$ being satisfied in a model $\model = \left( \worlds, \relation, \evidence, \valuation \right)$ at a world $w \in \worlds$ is defined by induction on the structure of $A$ by
\begin{itemize}
 \item $\model, w \forces p_i$ if and only if $w \in \valuation(p_i)$
 \item $\forces$ commutes with Boolean connectives
 \item $\model, w \forces t \justifies B$ if and only if
  \begin{enumerate}[1)]
   \item $\model, v \forces B$ for all $v \in \worlds$ with $wRv$ and
   \item $(t,B,w) \in \evidence$
  \end{enumerate}
\end{itemize}
We say a formula $A$ is valid in a model  $\model = \left( \worlds, \relation, \evidence, \valuation \right)$ if 
for all $w \in \worlds$ we have $\model, w \forces A$. We say a formula $A$ is valid for a logic $\Lo$ if for all models $\model$ for $\Lo$ we have that $A$ is valid in $\model$.
\end{definition} 

The logics defined above are sound and complete (with a restriction in case of the logics containing the $\axd$ axiom). See~\cite{Art08RSL,Fit05APAL,Pac05PLS} for the full proofs of the following results.

\begin{theorem}[Soundness]
Let $\CS$ be a constant specification.
If a formula $A$ is derivable in a logic $\Lo$, then $A$ is valid for $\Lo$.
\end{theorem}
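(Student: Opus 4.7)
The proof proceeds by a routine induction on the length of a derivation of $A$ in $\Lo$. At each step I would verify that every axiom scheme is valid at an arbitrary world $w$ of an arbitrary $\Lo$-model $\model = (\worlds, \relation, \evidence, \valuation)$, and that the deduction rules preserve validity. The classical axioms \ax1 are immediate from the clause that $\forces$ commutes with Boolean connectives, and modus ponens is equally immediate. For \ax2, from $\model, w \forces t \justifies (A \to B)$ and $\model, w \forces s \justifies A$, clause~1) of $\model, w \forces t \cdot s \justifies B$ follows by combining the two hypotheses at every $R$-successor of $w$, and clause~2) is exactly the application closure condition on $\evidence$; \ax3 is handled by the analogous condition for~$+$. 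For \axd, seriality of $R$ makes the assumption $\model, w \forces t \justifies \bot$ inconsistent, since it would require $\bot$ to hold at some $R$-successor. For \axt, reflexivity lets clause~1) instantiate to $w$ itself, immediately yielding $A$.

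The most delicate case is \axfour. Assuming $\model, w \forces t \justifies A$, clause~2) of $\model, w \forces !t \justifies t \justifies A$ is the $!$-closure condition applied to $(t,A,w) \in \evidence$. For clause~1), fix any $v$ with $wRv$; I then need $\model, v \forces t \justifies A$. Its first sub-clause follows from transitivity of $R$ (any $u$ with $vRu$ also satisfies $wRu$, hence $\model, u \forces A$), and its second sub-clause is precisely the monotonicity condition~\ref{cond:mon:1}. For the rules (AN) and (AN!), the required membership in $\evidence$ is hard-wired into item~3 of Definition~\ref{d:adevrel:1} in the appropriate one of its two forms (applied iteratively in the (AN!) case), while the clauses~1) at successive $R$-successors unfold by the same construction and ultimately appeal to validity of the axiom $A$ supplied by the inductive hypothesis.

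No step poses a genuine difficulty; the only point demanding real care is the coordination of the frame condition on $R$ with the monotonicity condition~\ref{cond:mon:1} on $\evidence$ in the \axfour{} case, which is exactly what Definition~\ref{d:adevrel:1} is engineered to capture. The full details follow the standard pattern of the references cited alongside the theorem, and the statement's parenthetical restriction belongs to the completeness half rather than this direction.
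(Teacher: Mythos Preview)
Your argument is correct and follows the standard soundness proof for Fitting-style semantics. The paper does not actually supply its own proof of this theorem: it simply refers the reader to \cite{Art08RSL,Fit05APAL,Pac05PLS}, whose proofs proceed by exactly the induction on derivations that you outline, matching each axiom scheme to its corresponding closure condition on $\evidence$ and frame condition on $R$.
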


\begin{theorem}[Completeness]
\label{th:completeness}
\begin{enumerate}
\item Let $\CS$ be a  constant specification. If a formula $A$ 
is not derivable in $\Lo \in \{\J,\JT,\JF,\LP\}$, then there exists a model $\model$ for $\Lo$
with $\model, w \not\forces A$ for some world $w$ in $\model$.
\item Let $\CS$ be an axiomatically appropriate constant specification. If a formula $A$
is not derivable in $\Lo \in \{\JD,\JDF\}$, then there exists a model $\model$ for $\Lo$
with $\model, w \not\forces A$ for some world $w$ in $\model$.
\end{enumerate}
\end{theorem}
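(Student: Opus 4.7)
The plan is a standard Henkin-style canonical model construction, tailored to each of the six systems. Fix $\Lo$ and a formula $A$ with $\Lo \not\proves A$. Extend $\{\lnot A\}$ to a maximal $\Lo$-consistent set by Lindenbaum. Then define the canonical model $\model^c = (\worlds^c, \relation^c, \evidence^c, \valuation^c)$ by taking $\worlds^c$ to be all maximal $\Lo$-consistent sets, setting $w \relation^c v$ iff $\{B \mid (\exists t)\, t \justifies B \in w\} \subseteq v$, putting $(t,B,w) \in \evidence^c$ iff $t \justifies B \in w$, and $w \in \valuation^c(p)$ iff $p \in w$.

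The core of the argument is the \textbf{truth lemma}: $\model^c, w \forces B$ iff $B \in w$, proved by induction on $B$. The only interesting case is $B \equiv t \justifies C$. The evidence clause is immediate from the definition of $\evidence^c$. For the Kripke clause, the ``only if'' direction uses \ax2 and \ax3 together with the definition of $\relation^c$; the ``if'' direction (needed to get $t \justifies C \in w$ from the semantics) requires the \emph{existence lemma}: if $t \justifies C \notin w$, then $\{D \mid (\exists s)\, s \justifies D \in w\} \cup \{\lnot C\}$ is $\Lo$-consistent, which is a straightforward manipulation of maximal consistency using \ax2.

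Next I would verify that $\model^c$ is a Fitting model for $\Lo$. The frame conditions follow directly from the characteristic axioms: \axt\ gives reflexivity via $t\justifies D \to D$, \axfour\ gives transitivity via the fact that $t\justifies D \in w$ together with \axfour\ and \axnec\ forces $t\justifies D$ into every $\relation^c$-successor, and \axd\ gives seriality. The admissibility clauses of $\evidence^c$ are a direct match to the calculus: \ax3 yields the ``$+$'' clause, \ax2 yields the ``$\cdot$'' clause, the rule (AN!) or (AN) gives the constant-specification clauses, and in the \axfour\ setting the same transitivity argument as for $\relation^c$ establishes the monotonicity condition~\ref{cond:mon:1}, while the ``$!t$'' clause follows from \axfour\ plus (AN).

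The main obstacle is the seriality step for $\Lo \in \{\JD,\JDF\}$: \axd\ alone only shows that $t \justifies \bot \notin w$ for each term $t$, but seriality requires that the whole set $\{B \mid (\exists s)\, s\justifies B \in w\}$ be consistent. This is exactly where axiomatic appropriateness of $\CS$ is needed: one picks a constant $c$ with $c \justifies (t\justifies \bot \to \bot) \in \CS$, applies (AN!) or (AN), and iterates so that any derivation of $\bot$ from that set can be pulled back to a derivation of $\bot$ from $w$ itself. Once all admissibility conditions and the truth lemma are in place, the world of $\model^c$ containing $\lnot A$ supplied by Lindenbaum is the required counter-model.
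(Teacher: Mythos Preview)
The paper does not give its own proof of this theorem; it simply refers to \cite{Art08RSL,Fit05APAL,Pac05PLS}. Your canonical-model strategy is indeed the standard one used in those sources, and the overall architecture (maximal $\Lo$-consistent sets, $\relation^c$ via ``unboxed'' formulas, $\evidence^c(t,B,w)$ iff $t\justifies B\in w$, truth lemma, then verification of frame and evidence conditions) is correct.

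There is, however, a genuine error in your treatment of the truth lemma for $t\justifies C$. You claim that the direction ``$\model^c,w\forces t\justifies C \Rightarrow t\justifies C\in w$'' requires an existence lemma asserting that if $t\justifies C\notin w$ then $\{D\mid(\exists s)\,s\justifies D\in w\}\cup\{\lnot C\}$ is $\Lo$-consistent. This lemma is \emph{false}: from $t\justifies C\notin w$ one cannot conclude anything about $C$, since some other term $s$ may satisfy $s\justifies C\in w$, putting $C$ into the first set and making the union inconsistent. Fortunately the lemma is also \emph{unnecessary}: the semantic clause for $t\justifies C$ already demands $(t,C,w)\in\evidence^c$, which by your definition of $\evidence^c$ is exactly $t\justifies C\in w$. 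So this direction of the truth lemma is one line, with no existence-style argument at all. (The genuine place where such a ``successor exists'' argument is needed is seriality for $\JD/\JDF$, and there you correctly invoke axiomatic appropriateness via internalization.)

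A minor point: the ``$!t$'' clause for $\evidence^c$ follows from \axfour\ alone (if $t\justifies A\in w$ then $!t\justifies t\justifies A\in w$); (AN) plays no role there.
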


\section{The Strong Finite Model Property and Decidability}
\label{sec:decidability}

In this section we define and establish the strong finitary model property for many justification logics. As a corollary we get decidability proofs for these logics.

\begin{definition}[Finitary model]
\label{def:finitarymodel}
A model $\model = \left( \worlds, \relation, \evidence, \valuation \right)$ is called {\em finitary} if
\begin{enumerate}
\item $W$ is finite,
\item there exists a finite base $\base$ such that $\evidence$ is the minimal evidence relation based on $\base$, and
\item \label{def:finitarymodel:finiteevaluationfunction} the set $\{(w,p) \in \worlds \times \propositions \mid w\in \valuation(p)\}$ is finite.
\end{enumerate}
If $\model = \left( \worlds, \relation, \evidence, \valuation \right)$ is a finitary model for $\Lo$, then will sometimes specify this model by the tuple $\left( \worlds, \relation, \base, \valuation \right)$ where $\base$ is the finite base for $\evidence$.
\end{definition} 

Making use of filtrations for justification logics, we obtain the following theorem~\cite{BucKuzStu13LNCS}.

\begin{lemma}[Completeness w.r.t.\ finitary models]\label{l:finitarycompleteness}\hfill
\begin{enumerate}
\item Let $\Lo \in \{\J,\JT,\JF,\LP\}$ and $\CS$ be a constant specification for $\Low$. If a formula $A$ is 
not derivable in $\Lo$, then there exists a finitary model $\model$ for $\Lo$
with $\model, w \not\forces A$ for some world $w$ in $\model$.
\item Let $\Lo \in \{\JD,\JDF\}$ and $\CS$ be an axiomatically appropriate constant specification for $\Low$. If a formula $A$
is not derivable in $\Lo$, then there exists a finitary model $\model$ for $\Lo$
with $\model, w \not\forces A$ for some world $w$ in $\model$.
\end{enumerate}
\end{lemma}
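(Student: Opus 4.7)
The plan is to build the finitary model by filtration, starting from a countermodel supplied by Theorem~\ref{th:completeness}. Given a non-derivable formula $A$, completeness provides a Fitting model $\model = (\worlds, \relation, \evidence, \valuation)$ for $\Lo$ and a world $w_0 \in \worlds$ with $\model, w_0 \not\forces A$. I fix a finite set $\Phi$ of formulae containing $A$ and closed under subformulae, and then define the standard equivalence $u \eqrel v$ to hold iff $u$ and $v$ agree on every formula of $\Phi$. The filtered set of worlds $\worlds' = \worlds/{\eqrel}$ is finite, of size at most $2^{|\Phi|}$, and the valuation restricted to atomic propositions occurring in $\Phi$ is automatically finite, giving condition~(\ref{def:finitarymodel:finiteevaluationfunction}) of Definition~\ref{def:finitarymodel}.

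Next I would choose the filtered accessibility relation $\relation'$ so that it preserves the frame conditions demanded by $\Lo$: for the basic case the relation between $[u]$ and $[v]$ is defined so that, for every $t \justifies B \in \Phi$, if $\model, u \forces t \justifies B$ then $\model, v \forces B$; for $\Lo \in \{\JF, \LP, \JDF\}$ one takes the transitive closure, adjusting to recover reflexivity for $\axt$ or seriality for $\axd$ as needed. The crucial point for the seriality case is that in $\JD$ and $\JDF$ the axiomatic appropriateness of $\CS$ allows us to ensure $\relation'$ is serial on the filtrated frame by standard manipulations on the $\eqrel$-classes.

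For the evidence component I would define a finite base
\[
\base = \{(t, B, [u]) \mid t \justifies B \in \Phi,\ (t, B, u) \in \evidence\},
\]
which is finite because only finitely many pairs $(t, B)$ occur in $\Phi$ and there are finitely many $\eqrel$-classes. Let $\evidence'$ be the minimal admissible evidence relation (for $\Lo$) based on $\base$ over the frame $(\worlds', \relation')$. Then $\model' = (\worlds', \relation', \evidence', \valuation')$ is by construction a finitary model for $\Lo$. A routine Truth Lemma by induction on the complexity of $B \in \Phi$ then shows $\model, u \forces B$ iff $\model', [u] \forces B$; the key clause $B = t \justifies C$ uses that $\evidence'$ contains $(t, C, [u])$ whenever $\model, u \forces t \justifies C$ (from the definition of $\base$) and, conversely, that any new triple $(t, C, [u])$ added during closure under the operator is irrelevant unless $t \justifies C \in \Phi$, in which case it is already witnessed in $\base$. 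Applying the Truth Lemma at the class of $w_0$ gives $\model', [w_0] \not\forces A$.

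The main obstacle is verifying that the minimal $\evidence'$ on $\base$ really agrees with the ``ambient'' filtered evidence on formulae in $\Phi$, and that in the $\axfour$ cases the monotonicity condition~(\ref{cond:mon:1}) together with the chosen transitive $\relation'$ does not force new triples relevant to $\Phi$ beyond those generated by $\base$. Once this coherence between $\base$, the closure operator, and the new relation $\relation'$ is settled, the rest is routine. Since this filtration argument is exactly the one worked out in~\cite{BucKuzStu13LNCS}, I would cite it at that point rather than reproduce the closure-condition book-keeping.
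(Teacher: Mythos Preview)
Your proposal is correct and follows the same approach as the paper: both obtain the finitary countermodel by the filtration construction of~\cite{BucKuzStu13LNCS}, and you explicitly defer the delicate verification (that the minimal evidence relation on the finite base agrees with the ambient evidence on $\Phi$) to that reference. The paper in fact gives no proof beyond the one-line citation, so your sketch is more detailed than what appears there; one small inaccuracy is that axiomatic appropriateness of $\CS$ is needed to invoke Theorem~\ref{th:completeness} for $\JD$ and $\JDF$, not to preserve seriality under filtration (seriality of $\relation'$ follows directly from seriality of $\relation$).
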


\begin{definition}\hfill
\begin{enumerate}
\item Let $A$ be a formula. We denote the length of $A$ (i.e.~the number symbols in $A$) by $\size{A}$. 
\item Let $\Gamma$ be a set. We denote the cardinality of $\Gamma$ (i.e.~the number of elements of $\Gamma$) by  $\size{\Gamma}$. 
\end{enumerate}
\end{definition}

\begin{definition}[Strong finitary model property]\label{def:strong:1}
A justification logic $\Lo$ has the \emph{strong finitary model property}
if there are computable functions $f,g,h$ such that
for any formula $A$ that is not satisfiable, there exists a finitary model $\model = \left( \worlds, \relation, \base, \valuation \right)$ for $\Lo$ with
\begin{enumerate}
\item $\model, w \not\forces A$ for some $w \in \worlds$
\item $\size{\worlds} \leq f(\size{A})$,
\item $\size{\base} \leq g(\size{A})$,
\item $\size{\valuation} \leq h(\size{A})$.
\end{enumerate}
\end{definition}

Given the proof of Lemma~\ref{l:finitarycompleteness} in~\cite{BucKuzStu13LNCS} it is easy to see that we can effectively compute bounds on the size of the finitary model. Thus we get the strong finitary model property as a corollary of  Lemma~\ref{l:finitarycompleteness}.

\begin{corollary}[Strong finitary model property]\label{cor:strong:1}\hfill
\begin{enumerate}
\item Let $\Lo \in \{\J,\JT,\JF,\LP\}$ and $\CS$ be a constant specification for\/ $\Low$. 
Then $\Lo$ has the strong finitary model property.
\item Let $\Lo \in \{\JD,\JDF\}$ and $\CS$ be an axiomatically appropriate constant specification for\/ $\Low$.
Then $\Lo$ has the strong finitary model property.
\end{enumerate}
\end{corollary}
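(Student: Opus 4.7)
The plan is to revisit the filtration-based construction underlying Lemma~\ref{l:finitarycompleteness} in~\cite{BucKuzStu13LNCS} and extract, by a straightforward inspection, effective bounds on the three size parameters of the resulting finitary countermodel in terms of $\size{A}$. Since the corollary itself asserts only the existence of computable $f,g,h$, no new semantic content is needed; the argument is a bookkeeping exercise layered on top of the existing filtration.

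First, I would fix the finite set $\Phi$ of formulae and terms determined by $A$ that is used to filtrate. Typically $\Phi$ contains the subformulae of $A$ together with certain subterms occurring in them, and, where the (AN!) rule is present, some additional axiom instances needed to respect condition~3 of Definition~\ref{d:adevrel:1}. The construction of $\Phi$ is a fixed recursive procedure applied to $A$, so $\size{\Phi}$ is bounded by a computable function of $\size{A}$. The worlds of the filtrated model are equivalence classes of worlds of a countermodel (supplied by Theorem~\ref{th:completeness}) under the relation of agreeing on all formulae in $\Phi$; consequently $\size{\worlds} \leq 2^{\size{\Phi}}$, which supplies the function $f$.

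Next, for the base $\base$ I would take the finite set of triples $(t,B,[w])$ where $t$ ranges over the (sub)terms tracked in $\Phi$, $B$ ranges over $\Phi$, and $[w]$ over worlds of the filtrated model, keeping only those triples forced to lie in the evidence relation by the countermodel. The cardinality of this set is bounded by $\size{\Phi}^2 \cdot \size{\worlds}$, yielding $g$. For the valuation, it suffices to set $\valuation(p) = \varnothing$ for every atom $p$ not occurring in $A$, so the set from clause~\ref{def:finitarymodel:finiteevaluationfunction} of Definition~\ref{def:finitarymodel} has size at most $\size{\worlds} \cdot \size{A}$, providing $h$. All three bounds are primitive recursive in $\size{A}$.

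The only genuine obstacle is verifying that the minimal evidence relation generated by this finite $\base$ still refutes $A$ at the distinguished world, i.e.~that on triples $(t,B,[w])$ with $t$ and $B$ in $\Phi$ the closure of $\base$ under the conditions of Definition~\ref{d:adevrel:1} agrees with the filtrated evidence relation. This is precisely the content of the filtration lemma established in~\cite{BucKuzStu13LNCS} as part of the proof of Lemma~\ref{l:finitarycompleteness}, and requires a separate verification for the non-\axfour\ logics (where condition 3 involves nested $!$-towers of constants) and for the \axfour\ logics (where conditions 4--6 impose monotonicity along $\relation$). Once that lemma is invoked, the three computable functions $f,g,h$ from Definition~\ref{def:strong:1} are immediate, and both clauses of the corollary follow.
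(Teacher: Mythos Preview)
Your proposal is correct and follows essentially the same approach as the paper: the paper's entire argument is the single remark preceding the corollary, namely that by inspecting the filtration construction behind Lemma~\ref{l:finitarycompleteness} in~\cite{BucKuzStu13LNCS} one can effectively compute bounds on the size of the finitary model. Your write-up simply makes explicit the bookkeeping (the $2^{\size{\Phi}}$ bound on worlds, the product bound on the base, the $\size{\worlds}\cdot\size{A}$ bound on the valuation) that the paper leaves implicit.
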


For a proof of the following lemma see~\cite[Lemma~4.4.6]{Kuz08PhD}.

\begin{lemma}
\label{lem:evidencedecidable}
Let $\CS$ be a decidable schematic constant specification and $\Lo \in \{\J,\JD,\JDF, \JT,\JF,\LP\}$.
Let $\model = (W, R, \evidence, \valuation)$ be a finitary model for $\Lo$.
Then the relation  $\model, w \forces A$ between worlds $w\in W$ and formulae~$A$ is decidable.
\end{lemma}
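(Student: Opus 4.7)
The plan is to prove decidability by induction on the structure of $A$. The atomic and Boolean cases are routine: $\model, w \forces p_i$ is decidable because clause~\ref{def:finitarymodel:finiteevaluationfunction} of Definition~\ref{def:finitarymodel} makes $\valuation$ a finite object, and the Boolean connectives pass straight through the induction hypothesis. The only interesting case is $A = t \justifies B$, where by the truth clause we must decide both (i)~whether $\model, v \forces B$ for every $v$ with $w \relation v$, and (ii)~whether $(t, B, w) \in \evidence$. Part~(i) is handled by the induction hypothesis together with the finiteness of $\worlds$, so the real work is to decide membership in the minimal evidence relation $\evidence$ generated by the finite base $\base$.

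For part~(ii) I would set up an auxiliary recursion on the term $t$. Because $\evidence$ is the least fixed point of a monotone operator whose clauses alter $t$ in structurally determined ways, any derivation of $(t, B, w) \in \evidence$ has, at its last step, a rule dictated by the head symbol of $t$. The monotonicity clause is the exception, but it is present only in \axfour-logics, where $\relation$ is transitive, so it can be absorbed into the procedure by asserting that $(t, B, w) \in \evidence$ iff for some $u \in \worlds$ with $u = w$ or $u \relation w$ there is a derivation of $(t, B, u)$ whose final step is not monotonicity. The recursion then splits on the outermost constructor of $t$. For variables and constants one consults $\base$ and, as appropriate, the decidable schematic $\CS$ or the pattern-matching condition produced by the (AN!) scheme. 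For $t = s + r$ one recurses on each summand. For $t = !u$ in the \axfour-case one requires $B$ to have the form $u \justifies B'$ and recurses on $(u, B', w)$.

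The delicate case is the product $t = s \cdot r$: the product closure clause demands an intermediate formula $C$ with $(s, C \to B, w), (r, C, w) \in \evidence$, and a priori $C$ ranges over all formulas. The main obstacle is to bound this search. The plan is to exploit schematicity and decidability of $\CS$ together with the finiteness of $\base$: by analysing through the sub-procedure for the proper subterm $s$ how $(s, C \to B, w)$ could possibly arise, one reads off finitely many effectively computable candidates for $C \to B$ by matching against each element of $\base$ and against each of the finitely many axiom schemes that contribute to $\CS$-justified evidence. Each candidate determines $C$, after which membership $(r, C, w) \in \evidence$ is tested by a recursive call on the proper subterm $r$. This pattern-matching argument is precisely the content of~\cite[Lemma~4.4.6]{Kuz08PhD}, to which the statement defers.
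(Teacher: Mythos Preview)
The paper gives no proof of its own here; it simply defers to \cite[Lemma~4.4.6]{Kuz08PhD}, which is exactly the reference you cite at the close of your sketch, and your outline is a faithful summary of that argument. One small slip worth noting: the finite base~$\base$ may contain triples whose term component is compound, so every case of your term recursion---not only the variable and constant cases---must also test membership in~$\base$ directly; otherwise a triple such as $(s{+}r,B,w)\in\base$ would be missed when neither summand alone justifies~$B$.
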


\begin{corollary}[Decidability]\hfill
\begin{enumerate}
\item Any justification logic in  $\{\J,\JT,\JF,\LP\}$ with a decidable schematic $\CS$ is decidable.
\item Any justification logic in $\{\JD,\JDF\}$ with a decidable, schematic and axiomatically appropriate $\CS$ is decidable.
\end{enumerate}
\end{corollary}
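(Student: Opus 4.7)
The plan is to reduce derivability in $\Lo$ to a finite search through bounded-size finitary countermodels. Given a formula $A$, I would apply the strong finitary model property (Corollary~\ref{cor:strong:1}) to obtain computable bounds $f(\size{A}), g(\size{A}), h(\size{A})$ such that, combining soundness with that corollary, $A$ is non-derivable in $\Lo$ iff there is a finitary model $\model = (\worlds,\relation,\base,\valuation)$ for $\Lo$ respecting these bounds with $\model, w \not\forces A$ for some $w \in \worlds$. The hypotheses on $\CS$ in parts (1) and (2) of the present corollary are exactly those needed to invoke the respective parts of Corollary~\ref{cor:strong:1}.

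Next I would effectively enumerate all such candidate models and, for each, apply Lemma~\ref{lem:evidencedecidable} to decide whether $A$ holds at every world. Lemma~\ref{lem:evidencedecidable} requires $\CS$ to be decidable and schematic, which is exactly our hypothesis. Worlds can be taken from a fixed set of size $\leq f(\size{A})$; the accessibility relation $\relation$ ranges over those subsets of $\worlds \times \worlds$ satisfying the frame property imposed by $\Lo$ (transitive for $\JF$ and $\LP$, reflexive for $\JT$ and $\LP$, serial for $\JD$ and $\JDF$), which is straightforwardly checked; and $\valuation$ may be restricted to atoms occurring in $A$ without affecting satisfaction of $A$. The algorithm then answers ``derivable'' iff no enumerated model refutes $A$ at any world; correctness follows from the biconditional above.

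The main obstacle I expect is the enumeration of bases: the strong finitary model property bounds $\size{\base}$ but not the syntactic complexity of the terms and formulae appearing in triples of $\base$, so effectivity of the search requires a computable upper bound on this complexity as a function of $A$. I expect inspection of the filtration construction behind Lemma~\ref{l:finitarycompleteness} in~\cite{BucKuzStu13LNCS} to supply such a bound, since filtration arguments typically draw their triples from a closure of subformulae of $A$ together with the correspondingly constructed terms; once that bound is in place, the rest is an immediate packaging of Corollary~\ref{cor:strong:1} with Lemma~\ref{lem:evidencedecidable}.
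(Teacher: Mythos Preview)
Your proposal is correct and follows essentially the same route as the paper: enumerate bounded finitary candidate models, check the frame conditions for $\Lo$, and invoke Lemma~\ref{lem:evidencedecidable} to decide satisfaction at each world. You are in fact more careful than the paper on one point: the paper's proof simply asserts that one can ``generate all finitary models'' within the stated bounds, whereas you correctly flag that bounding $\size{\base}$ alone does not bound the syntactic size of the triples in $\base$, and that this extra bound must be read off from the filtration construction in~\cite{BucKuzStu13LNCS}.
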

\begin{proof}
Let $\Lo$ be one of the above justification logics.
Given a formula $A$ we can generate all finitary models $\model = \left( \worlds, \relation, \base, \valuation \right)$ for $\Lo$ with
\begin{enumerate}
\item $\size{\worlds} \leq f(\size{A})$,
\item $\size{\base} \leq g(\size{A})$,
\item $\size{\valuation} \leq h(\size{A})$,
\end{enumerate}
for the functions $f, g, h$ from Definition~\ref{def:strong:1}. 
Note that we can decide whether a structure  $\model =   \left( \worlds, \relation, \base, \valuation \right)$ is a model for $\Lo$ since the required conditions on the accessibility relation, some combination of transitivity, reflexivity, and seriality can be effectively verified.

By Lemma~\ref{lem:evidencedecidable} we can decide for each of these finitary models, whether $\model, w \forces A$ for all $w \in \worlds$.

Making use of Corollary~\ref{cor:strong:1} we know that if $A$ is not $\Lo$-satisfiable, then the above procedure will generate a finitary model  $\model =   \left( \worlds, \relation, \base, \valuation \right)$ such that  $\model, w \not \forces A$ for some $w \in \worlds$.
Therefore, we conclude that satisfiability for $\Lo$ is decidable.
\end{proof}

\section{Conclusion}
\label{sec:conclusion}

We observed that justification logic enjoys a form of the strong finite model property (sometimes also called small model property). Thus we obtain decidability proofs for justification logics that do not rely on Post's theorem.


\end{document}